\begin{document}
\newtheorem{theorem}{Theorem}
\newtheorem{corollary}[theorem]{Corollary}
\newtheorem{lemma}[theorem]{Lemma}
\newtheorem{proposition}[theorem]{Proposition}
\newtheorem{remark}[theorem]{Remark}
\newtheorem{definition}{Definition}

\title{Unbounded formulation of the rotation group}

\author{Yoritaka Iwata}

\address{Tokyo Institute of Technology, Tokyo 152-8550, Japan}
\address{Shibaura Institute of Technology, Tokyo 108-8548, Japan}

\ead{iwata$\_$phys@08.alumni.u-tokyo.ac.jp}

\begin{abstract}
The rotation group is formulated based on the abstract $B(X)$-module framework.
Although the infinitesimal generators of rotation group include differential operators, the rotation group has been formulated utilizing the framework of bounded operator algebra. 
In this article, unbounded formulation of rotation group is established.
\end{abstract}

\section{Introduction}
The rotation group is generated by the angular momentum operator (for textbooks, see Refs.~\cite{50weyl,57yamanouchi}).
The angular momentum operator includes a differential operator, as represented by
\begin{equation} \begin{array}{ll}
{\mathcal L} = -i \hbar ({\bf r} \times \nabla),
\end{array} \end{equation}
where $\hbar$ is a real constant called the Dirac constant. 
The appearance of differential operator $\nabla$ in the representation of ${\mathcal L}$ is essential.
The operator $\nabla$ is an unbounded operator for example in a Hilbert space $L^2({\bf R}^3)$, while it must be treated as a bounded operator in terms of establishing an algebraic ring structure.
Furthermore, the operator boundedness is also indispensable for some important formulae such as the Baker-Campbell-Hausdorff formula and the Zassenhaus formula to be valid.
In general, the exponential of unbounded operators cannot be represented by the power series expansion (cf. the Yosida approximation in a typical proof of the Hille-Yosida theorem; e.g., see Ref.~\cite{79tanabe}).

In this article, using $B(X)$-module~\cite{17iwata-2}, the bounded part is extracted from unbounded operators.
The extracted bounded parts are utilized to formulate the rotation group without disregarding the unboundedness of angular momentum algebra.
%The concept of $B(X)$-module is general enough to provide a foundation of the conventional bounded formulation of Lie algebras.
%In other words, by means of $B(X)$-module, the intersection of the Banach algebra (including only bounded operators) and the extracted bounded part of the Lie algebra (including unbounded operators) is shown.

\section{Rotation group}
Let ${\mathbf R}^3$ be the three-dimensional spatial coordinate spanned by the standard orthogonal axes, $x$, $y$ and $z$.
The angular momentum operator ${\mathcal L}$ is considered in $L^2({\mathbf R}^3)$.
The angular momentum operator
\[ \begin{array}{ll}
{\mathcal L} =  ({\mathcal L}_x, {\mathcal L}_y,{\mathcal L}_z)  \vspace{2.5mm} \\
\end{array} \] 
consists of $x$, $y$, and $z$ components
\[  \begin{array}{ll}
\quad {\mathcal L}_x  = -i \hbar ( y \partial_{z} -  z \partial_{y}),    \vspace{2.5mm} \\
\quad {\mathcal L}_y  = -i \hbar ( z \partial_{x} -  x \partial_{z}),   \vspace{2.5mm} \\
\quad {\mathcal L}_z  = -i \hbar ( x \partial_{y} -  y \partial_{x}), 
\end{array} \]
respectively.
The commutation relations
\begin{equation} \label{eqcom}  \begin{array}{ll}
[{\mathcal L}_x, {\mathcal L}_y] = i \hbar {\mathcal L}_z, \quad
[{\mathcal L}_y, {\mathcal L}_z] = i \hbar {\mathcal L}_x, \quad
[{\mathcal L}_z, {\mathcal L}_x] = i \hbar {\mathcal L}_y
\end{array} \end{equation}
are true, where $[{\mathcal L}_i,{\mathcal L}_j]:={\mathcal L}_i {\mathcal L}_j-{\mathcal L}_j {\mathcal L}_i$ denotes a commutator product ($i,j =x,y,z$).
The commutation of angular momentum operators arises from the commutation relations of the canonical quantization
\begin{equation} \label{omt}  \begin{array}{ll}
[x, p_x] = [y, p_y] = [z, p_z] = i \hbar,  \vspace{1.5mm}  \\
\left[y, p_x \right] = \left[y, p_z\right] = \left[z, p_x\right] = \left[z, p_y\right] = \left[x, p_y\right] = \left[x, p_z\right]  = 0.
\end{array} \end{equation}
Indeed, the momentum operator $p=(p_x,p_y,p_z)$ is represented by $p=-i \hbar ( \partial_{x},  \partial_{y},  \partial_{z})$ in quantum mechanics.
It is remarkable that the commutation is always true for the newtonian mechanics;  i.e., $[x, p_x] = [y, p_y] = [z, p_z] =0$ is true in addition to $\left[y, p_x \right] = \left[y, p_z\right] = \left[z, p_x\right] = \left[z, p_y\right] = \left[x, p_y\right] = \left[x, p_z\right]  = 0$.

Let a set of all bounded operators on $L^2({\mathbf R}^3)$ be denoted by $B(L^2({\mathbf R}^3))$.
A set of operators $\{ {\mathcal L}_k; ~ k = x,y,z \}$ or  $\{ i {\mathcal L}_k/\hbar; ~ k = x,y,z \}$ with the commutation relation (\ref{eqcom}) is regarded as the Lie algebra. 
In particular $\{ {\hat \alpha} {\mathcal L}_x + {\hat \beta} {\mathcal L}_y + {\hat \gamma} {\mathcal L}_z; ~ {\hat \alpha}, {\hat \beta}, {\hat \gamma} \in {\mathbf C} \}$ forms a vector space over the complex number field, while $\{ \alpha( i {\mathcal L}_x/\hbar) + \beta (i {\mathcal L}_y/\hbar) + \gamma (i {\mathcal L}_z/\hbar); ~ \alpha, \beta, \gamma \in {\mathbf R} \}$ is a vector space over the real number field.
It is possible to associate the real numbers $\alpha$, $\beta$, and $\gamma$ with the Euler angles (for example, see Ref.~\cite{82landau}).
%%%
The second term of the right hand side of
\begin{equation} \begin{array}{ll} \label{intem}
  ( r_i \partial_{j} ) ( r_k \partial_{l} ) 
  =  r_i (\partial_{j}   r_k) \partial_{l} +  r_i   r_k \partial_{j} \partial_{l} 
\end{array} \end{equation}
disappears as far as the commutator product $[{\mathcal L}_i,{\mathcal L}_j]$ is concerned, where $r_i$ is equal to $i$, and $i,j,k,l =x$, $y$, or $z$ satisfy $i \ne j$ and $k \ne l$.
This fact is a key to justify the algebraic ring structure of $\{ {\mathcal L}_k ; ~ k = x,y,z \}$.
On the other hand, although ${\mathcal L}_k$ is assumed to be bounded on $L^2({\mathbf R}^3)$ in the typical treatment of the Lie algebra, it is not the case for the angular momentum algebra because of the appearance of differential operators in their definitions. 
From a geometric point of view, the range space $R({\mathcal L_k}) \subset L^2({\mathbf R}^3)$ strictly includes the domain space $D({\mathcal L_k})$; i.e., there is no guarantee for any $u \in L^2({\mathbf R}^3)$ and a certain positive $M \in {\mathbf R}$ to satisfy $ \| {\mathcal L_k} u \|_{ L^2({\mathbf R}^3)} \le M \| u \|_{ L^2({\mathbf R}^3)}$.

\section{Results}
\subsection{$B(X)$-module as a set of pre-infinitesimal generators}
Let $t$ and $s$ satisfy $t, s \in [-T, +T]$ for a certain positive real number $T$.
Let $X$ and $U(t,s)$ be a Banach space and a strongly-continuous two-parameter group on $X$ satisfying the semigroup property:
\[ \begin{array}{ll}
U(t,r) U(r,s) = U(t,s), \vspace{1.5mm} \\
U(s,s) = I,
\end{array} \]
where $I$ denotes the identity operator of $X$.
The logarithmic representation of infinitesimal generators is introduced using $U(t,s)$.
Although there is a generalized relativistic version of logarithmic representation \cite{18iwata-2}, it is sufficient here to utilize the original invertible version \cite{17iwata-1}. \vspace{3.5mm} \\
%%%%%%
${\bf Definition.}$ ~
Let $t$ and $s$ satisfy $t, s \in [-T, +T]$ for a certain positive real number $T$.
{\it Let $Y$ and $u_s$ be a subspace of $X$ and an element included in $Y$, respectively. 
For a given evolution operator $U(t,s)$, the pre-infinitesimal generator $A(t):Y \to X$ is a closed operator defined by
\[
A(t) u_s :=  \mathop{\rm wlim}\limits_{h \to 0}  h^{-1} (U(t+h,t) - I) u_s,
\]
where ${\rm wlim}$ means a weak limit.
\\ }
%%%%%%

Let $A(t)$ commute with $U(t,s)$.
The pre-infinitesimal generator is a generalized concept of infinitesimal generator, but it is not necessarily an infinitesimal generator without assuming the dense property of domain space $Y$ in $X$. 
That is, only the exponentiability with a certain ideal domain is valid to the pre-infinitesimal generators. 
The pre-infinitesimal generator $A(t)$ is represented by
\begin{equation} \label{logr} \begin{array}{ll}
A(t) u_s =  (I + \kappa U(s,t) ) \partial_t a(t, s) u_s,  \vspace{2.5mm}  \\
 \partial_t a(t, s) := \partial_t {\rm Log} (U(t, s) + \kappa I),
\end{array} \end{equation}
if $A(t)$ and $U(t,s)$ commute.
In this representation, Log denotes the principal branch of logarithm being defined by the Riesz-Dunford integral, and $\kappa \ne 0$ denotes a certain complex constant.
The representation (\ref{logr}), which is called the logarithmic representation of operators, corresponds to the generalization of the Cole-Hopf transform \cite{18iwata-1}.
The operator $\partial_t a(t,s)$ is the pre-infinitesimal generator, because $\exp[a(t,s)]$ is always well-defined by the boundedness of $a(t,s)$ on $X$.
Note that $\exp[a(t,s)]$ with $\kappa \ne 0$ does not satisfy the semigroup property \cite{17iwata-3}, and $A(t)$ is not necessarily bounded on $X$.

Using the logarithmic representation of generally-unbounded operators in a Banach space $X$, the module over the Banach algebra is introduced based on Ref.~\cite{17iwata-2}.
Much attention is paid to the part $a(t,s)$ of the logarithmic representation. 
Let $U_i(t, s)$ with $i = 1, 2,  \cdots, n$ be evolution operators satisfying the semigroup property.
For a certain $K \in B(X)$, the operators ${\rm Log}(U_i(t, s)+ K)$ are assumed to commute with each other.
A module over the Banach algebra is defined by
\[ \begin{array}{ll}
B_{Lg}(X)  = \{  {\mathcal K}  {\rm Log} (U_i(t, s) + K); \quad  {\mathcal K} \in B_{ab}(X), ~ K \in B(X),~ t,s \in [-T,+T]  \}
\subset B(X),
\end{array} \]
where $B_{ab}(X)$ means a set of bounded operators on $X$, which commute with ${\rm Log} (U_i(t, s) + K)$.
The set $B_{Lg}(X)$ is called the $B(X)$-module in Ref.~\cite{17iwata-2}.
The operator $\partial_t [ {\mathcal K}  {\rm Log} (U_i(t, s) + K) ]$ is the pre-infinitesimal generator of $\exp [{\mathcal K}  {\rm Log} (U_i(t, s) + K)]$.  
Consequently, as suggested by the inclusion relation $B_{Lg}(X) \subset B(X)$, operators $\partial_t a(t,s)$ are the pre-infinitesimal generators if $a(t,s) \in G_{Lg}(X)$ is satisfied.
In the following the operator norms are simply denoted by $\| \cdot \|$ if there is no ambiguity. 
The pre-infinitesimal generator property is examined for products of operators in the next lemma. \\

\begin{lemma}
Let an operator denoted by 
\[
L  {\rm Log} (U(t, s) + K)
\]
be included in $B_{Lg}(X)$, where the evolution operator $U(t,s)$ is generated by $A(t)$, $L$ is an element in $B_{ab}(X)$, and $K$ is an element in $B(X)$.
Let $K$ and $L$ be further assumed to be independent of $t$.
The product of pre-infinitesimal generators, which is represented by
\begin{equation}  \begin{array}{ll}
 L A(t) = (I + KU(s,t) )  \partial_t [ L  {\rm Log} (U(t, s) + K)],
\end{array} \end{equation}
is also the pre-infinitesimal generators in $X$.
\end{lemma}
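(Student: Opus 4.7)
The plan is to reduce the lemma to the logarithmic representation~(\ref{logr}) applied to $A(t)$, together with the commutation hypotheses on $L$, and then invoke the remark preceding the lemma that $\partial_t$ of any element of $B_{Lg}(X)$ is automatically a pre-infinitesimal generator. First I would write the (operator-valued) logarithmic representation
\[
A(t) u_s \;=\; (I + K U(s,t))\, \partial_t\, \mathrm{Log}(U(t,s) + K)\, u_s,
\]
which is the representation stated in~(\ref{logr}) with the scalar $\kappa$ replaced by $K \in B(X)$; this extension is legitimate because $\mathrm{Log}(U(t,s)+K)$ is the Riesz--Dunford integral already used in the definition of $B_{Lg}(X)$.

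Next I would apply $L$ from the left and push it inside using commutation. By definition of $B_{ab}(X)$, the operator $L$ commutes with $\mathrm{Log}(U(t,s)+K)$, hence by the holomorphic functional calculus it commutes with $U(t,s)+K$ itself. Evaluating at $t=s$ where $U(s,s)=I$ yields $[L,K]=0$, and therefore $[L,U(t,s)]=0$ as well. Since $L$ is independent of $t$, it can be moved past the time derivative and past the bounded prefactor, giving
\[
L A(t) u_s \;=\; (I + K U(s,t))\, L\, \partial_t \mathrm{Log}(U(t,s)+K)\, u_s \;=\; (I + K U(s,t))\, \partial_t[L\, \mathrm{Log}(U(t,s)+K)]\, u_s,
\]
which is the displayed identity of the lemma.

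Finally, I would argue that the resulting operator is a pre-infinitesimal generator. By hypothesis $L\,\mathrm{Log}(U(t,s)+K)$ belongs to $B_{Lg}(X)$, so by the remark preceding the lemma its time derivative is the pre-infinitesimal generator of the bounded evolution family $\exp[L\,\mathrm{Log}(U(t,s)+K)]$ on a suitable invariant subspace $Y \subset X$. Multiplication by the bounded, $t$-dependent operator $(I + K U(s,t))$ keeps the operator closed on the same domain, and the logarithmic representation machinery identifies the product as the pre-infinitesimal generator of the correspondingly dressed evolution operator, in the same way that~(\ref{logr}) identifies $A(t)$ from $\partial_t a(t,s)$.

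The step I expect to be the main obstacle is this last identification: exhibiting (or invoking) a strongly continuous two-parameter family whose weak right-derivative at $h=0$ literally reproduces $L A(t) u_s$. The algebraic manipulation and the commutation are routine once the functional-calculus commutation of $L$ with $U(t,s)$ and $K$ has been isolated; what requires care is that the prefactor $(I + K U(s,t))$ is not the identity, so $L A(t)$ does not coincide with $\partial_t[L\,\mathrm{Log}(U(t,s)+K)]$, and one must appeal to the $B(X)$-module structure of Ref.~\cite{17iwata-2} to confirm that the combined expression still arises as the weak limit in the definition of a pre-infinitesimal generator on a nonempty ideal domain.
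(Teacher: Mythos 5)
Your derivation of the displayed identity is essentially the paper's: both pull the $t$-independent $L$ through $\partial_t$ and through the prefactor $(I+KU(s,t))$ using the commutation hypotheses, and then recognize $(I+KU(s,t))\,\partial_t{\rm Log}(U(t,s)+K)=(I+KU(s,t))(U(t,s)+K)^{-1}\partial_tU(t,s)=A(t)$ via $\partial_tU(t,s)=A(t)U(t,s)$. Your extra step of extracting $[L,K]=0$ and $[L,U(t,s)]=0$ from the functional calculus is a reasonable (arguably more careful) way to justify commutations the paper simply assumes.

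The genuine gap is exactly where you flag it: you never actually prove that $(I+KU(s,t))\,\partial_t[L\,{\rm Log}(U(t,s)+K)]$ is a pre-infinitesimal generator; you only observe that the prefactor obstructs a direct appeal to the $B_{Lg}(X)$ membership of $L\,{\rm Log}(U(t,s)+K)$ and defer to unspecified ``machinery.'' The paper's resolution of this step is concrete and you should supply it: in this framework the pre-infinitesimal generator property amounts to the boundedness on $X$ of the primitive
\[
\int_s^t  (I + KU(s,\tau) )\, \partial_{\tau} [ L\, {\rm Log} (U(\tau, s) + K)]\, d\tau ,
\]
since exponentiability is guaranteed once the $a(t,s)$-type object is bounded. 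One establishes this by a mean-value-theorem argument that replaces the $\tau$-dependent prefactor by $(I+KU(s,\sigma))$ for some $\sigma\in[s,t]$, bounds it by $1+\|K\|\sup_{\tau\in[s,t]}\|U(s,\tau)\|$, and evaluates the remaining integral exactly as ${\rm Log}(U(t,s)+K)-{\rm Log}(I+K)$, which is bounded because both logarithms lie in $B(X)$. This yields the norm estimate
\[
\Bigl(1+\|K\|\sup_{\tau\in[s,t]}\|U(s,\tau)\|\Bigr)\,\|L\|\,\bigl\|{\rm Log}(U(t,s)+K)-{\rm Log}(I+K)\bigr\|,
\]
and hence the claimed pre-infinitesimal generator property; without some such quantitative step (or an explicit citation of a result in Ref.~\cite{17iwata-2} that delivers it), your argument does not close.
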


\begin{proof}
Since $L$ is independent of $t$,
\[
\partial_t [ L {\rm Log} (U(t, s) + K) ]
 =  L \partial_t [ {\rm Log} (U(t, s) + K)]
\]
is true.
The basic calculi using the $t$-independence of $K$ leads to the product of operator $L A(t)$.
It is well-defined by
\[  \begin{array}{ll}
    (I + K U(s,t) )  \partial_t [ L {\rm Log} (U(t, s) + K)]  \vspace{1.5mm}  \\ 
\quad  = L (I + K U(s,t) )  \partial_t [ {\rm Log} (U(t, s) + K)]    \vspace{1.5mm}  \\
\quad  = L (I + K U(s,t) )    (U(t, s) + K)^{-1}  \partial_t U(t, s)   %\vspace{1.5mm}  \\
 = L A(t)
\end{array} \]
under the commutation assumptions, where the relation $\partial_t U(t, s) = A (t)U(t,s)$ is applied.
%\[  \begin{array}{ll}
% (I + KU_i(s,t) )  \partial_t [  {\mathcal K} {\rm Log} (U_i(t, s) + K)]  \vspace{1.5mm}  \\
%  =  {\mathcal K}  K  {\rm Log} (U_i(t, s) + K)  \partial_t[ U_i(s,t) ] 
%   -   \partial_t [  (I + KU_i(s,t) )   {\mathcal K} {\rm Log} (U_i(t, s) + K)]     
%\end{array} \]
Let $t,s \in [-T,+T]$ satisfy $s<t$. 
The pre-infinitesimal generator property of $L A(t)$ is confirmed by
%\[  \begin{array}{ll}
% \left| \int_s^t  (I + KU_i(s,\tau) )  \partial_{\tau} [  {\mathcal K} {\rm Log} (U_i(\tau, s) + K)]  d\tau \right|   \vspace{1.5mm}  \\
% \le  \sup_{\tau \in [s,t]} (I + KU_i(s,\tau) )  \left| \int_s^t   \partial_{\tau} [  {\mathcal K} {\rm Log} (U_i(\tau, s) + K)]  d\tau \right|  \vspace{1.5mm}  \\
%  =  \sup_{\tau \in [s,t]} (I + KU_i(s, \tau) )  \left|     {\mathcal K} {\rm Log} (U_i(t, s) + K) -   {\mathcal K} {\rm Log} (I + K) \right| 
%\end{array} \]
%leading to
\[  \begin{array}{ll}
 \left\| \int_s^t  (I + KU(s,\tau) )  \partial_{\tau} [  L {\rm Log} (U(\tau, s) + K)]  d\tau \right\| \vspace{1.5mm}  \\
  \le      \left\|    (I + KU(s, \sigma) ) \int_s^t  \partial_{\tau} [ L {\rm Log} (U(\tau, s) + K)]  d\tau \right\| \vspace{1.5mm}  \\
  \le   {\displaystyle \sup_{\tau \in [s,t]} } \left\|  (I + KU(s,\tau) )  \right\|   \left\| \int_s^t  \partial_{\tau} [  L {\rm Log} (U(\tau, s) + K)]  d\tau \right\|  \vspace{1.5mm}  \\
 \le \Bigl(1  +     \left\| K \right\|  {\displaystyle \sup_{\tau \in [s,t]} } \left\| U(s,\tau)  \right\|  \Bigr)
   \left\|     L  \right\|
   \left\|   {\rm Log} (U(t, s) + K) -  {\rm Log} (I + K) \right\|,
\end{array} \]
where a certain real number $\sigma \in [s,t]$ is determined by the mean value theorem. 
Consequently, due to the boundedness of $\int_s^t  (I + KU(s,\tau) )  \partial_{\tau} [  L {\rm Log} (U(\tau, s) + K)]  d\tau$ on $X$, $(I + KU(s,t) )  \partial_t [  L {\rm Log} (U(t, s) + K)] $ is confirmed to be the pre-infinitesimal generator in $X$.
\end{proof}

As for the angular momentum operator, the $t$-independent assumption for operators $K$ and $L$ is satisfied.
Note that $t$-independence assumes a kind of commutation relation.
This lemma shows the product-perturbation for the infinitesimal generators of $C^0$-groups under the commutation, although the perturbation has been studied mainly for the sum of operators.
It is remarkable that the self-adjointness of the operator is not required for this lemma.
For the details of conventional bounded sum-perturbation, and the perturbation theory for the self-adjoint operators, see Ref.~\cite{66kato}. \\

\begin{lemma}
Let an operator denoted by 
\[
L  {\rm Log} (U(t, s) + K(t))
\]
be included in $B_{Lg}(X)$, where the evolution operator $U(t,s)$ is generated by $A(t)$, $L$ is an element in $B_{ab}(X)$, and $K(t)$ is an element in $B(X)$.
Let $L$ and $K(t)$ be $t$-independent and $t$-dependent, respectively.
The operators represented by
\begin{equation}  \begin{array}{ll}  \label{prorep}
G(t)  \partial_t [ L  {\rm Log} (U(t, s) + K(t))]
\end{array} \end{equation}
is the pre-infinitesimal generators in $X$, if the operator $G(\tau) \in B(X)$ is strongly continuous with respect to $\tau$ in the interval $[s,t]$. 
\end{lemma}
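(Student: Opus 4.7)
The plan is to mirror the strategy used in Lemma 1: I would establish boundedness of $\int_s^t G(\tau)\,\partial_\tau[L\,{\rm Log}(U(\tau,s)+K(\tau))]\,d\tau$ on $X$ for $s<t$ in $[-T,+T]$, from which the pre-infinitesimal generator property of (\ref{prorep}) at $\tau=t$ follows by the same reasoning as in the closing sentence of the previous proof. Because $L$ is $t$-independent, it commutes with $\partial_\tau$ and can be pulled outside the derivative, so the integrand rewrites as $G(\tau)\,L\,\partial_\tau {\rm Log}(U(\tau,s)+K(\tau))$.

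Next, I would apply the fundamental theorem of calculus to the inner factor to obtain
\[
\int_s^t \partial_\tau {\rm Log}(U(\tau,s)+K(\tau))\,d\tau = {\rm Log}(U(t,s)+K(t)) - {\rm Log}(I+K(s)),
\]
which lies in $B(X)$ by the membership assumption in $B_{Lg}(X)$. This identity replaces the closed-form derivative $\partial_\tau{\rm Log}(U(\tau,s)+K) = (U(\tau,s)+K)^{-1}\partial_\tau U(\tau,s)$ that was exploited in Lemma 1; once $K$ depends on $\tau$, such an explicit derivative is no longer directly available, but only the endpoint values of Log are needed for the eventual estimate.

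For the external factor $G(\tau)$, the substitute for the explicit quantity $(I+KU(s,\tau))$ of Lemma 1 is the uniform bound $M := \sup_{\tau\in[s,t]}\|G(\tau)\| < \infty$, which is guaranteed by strong continuity of $G$ on the compact interval $[s,t]$ via the Banach--Steinhaus principle. Combining these ingredients yields
\[
\left\|\int_s^t G(\tau)\,\partial_\tau[L\,{\rm Log}(U(\tau,s)+K(\tau))]\,d\tau\right\| \le M\,\|L\|\,\|{\rm Log}(U(t,s)+K(t)) - {\rm Log}(I+K(s))\|,
\]
which is finite. The pre-infinitesimal generator conclusion for (\ref{prorep}) then follows exactly as at the end of the proof of Lemma 1.

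The step I expect to be the main obstacle is justifying the Newton--Leibniz identity displayed above when $K(\tau)$ varies with $\tau$, since one cannot simply invoke $\partial_\tau U(\tau,s) = A(\tau)U(\tau,s)$ and remain inside $B(X)$: the resolvent-type factor that would then appear is not a priori bounded. The cleanest way around this is to work directly with the Riesz--Dunford integral defining Log. Norm-continuity (and, with some care, absolute continuity) of $\tau \mapsto U(\tau,s)+K(\tau)$ together with holomorphic dependence of the principal branch on its argument away from the branch cut yields absolute continuity of $\tau \mapsto {\rm Log}(U(\tau,s)+K(\tau))$ as a $B(X)$-valued function, and hence the integral identity, without requiring any pointwise control of the derivative itself.
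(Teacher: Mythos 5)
Your proposal follows essentially the same route as the paper: both reduce the claim to the boundedness of $\int_s^t G(\tau)\,\partial_\tau[L\,{\rm Log}(U(\tau,s)+K(\tau))]\,d\tau$, extract a uniform bound on the external factor (the paper by a mean-value-theorem step producing a single $G(\sigma)$, you by strong continuity plus uniform boundedness on the compact interval), and arrive at the identical final estimate $\sup_{\tau}\|G(\tau)\|\,\|L\|\,\|{\rm Log}(U(t,s)+K(t))-{\rm Log}(I+K(s))\|$. Your additional justification of the Newton--Leibniz identity for $\tau\mapsto{\rm Log}(U(\tau,s)+K(\tau))$ via the Riesz--Dunford integral is more careful than what the paper records, though note that the step of pulling the $\tau$-dependent factor out of the integral while retaining only the endpoint difference of ${\rm Log}$ is exactly as delicate in your version as in the paper's own argument.
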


\begin{proof}
Let $t,s \in [-T,+T]$ satisfy $s<t$. 
The pre-infinitesimal generator property is reduced to the possibility of applying the mean-value theorem.
\[  \begin{array}{ll}
 \left\| \int_s^t  G(\tau)  \partial_{\tau} [   {\rm Log} (U(\tau, s) + K(\tau))]  d\tau \right\|   \vspace{1.5mm}  \\
  \le      \left\|  G(\sigma)  \int_s^t  \partial_{\tau} [  {\rm Log} (U(\tau, s) + K(\tau))]  d\tau \right\|   \vspace{1.5mm}  \\
  \le   {\displaystyle \sup_{\tau \in [s,t]} } \left\|  G(\sigma)  \right\|    \left\| \int_s^t  \partial_{\tau} [   {\rm Log} (U(\tau, s) + K(\tau))]  d\tau \right\|  \vspace{1.5mm}  \\
 \le   {\displaystyle \sup_{\tau \in [s,t]} } \left\|  G(\sigma)  \right\|
   \left\|   {\rm Log} (U(t, s) + K(t)) -  {\rm Log} (I + K(s)) \right\|,
\end{array} \]
where a certain real number $\sigma \in [s,t]$ is determined by the mean value theorem. 
Consequently, $G(t)  \partial_t [ L  {\rm Log} (U(t, s) + K(t))] $ is confirmed to be the pre-infinitesimal generator in $X$.
\end{proof}

Equation (\ref{prorep}) provides one standard form for the representation of operator products in the sense of logarithmic representation. 
As a result, $B(X)$-module is associated with the pre-infinitesimal generator.\\

\subsection{Preparatory Lemmas}
%The contraction mapping argument is applicable with in the bounded framework.
%\begin{lemma} [Banach, Cauchy, Lipshictz, Picard]
%Let $X$ be $L^2({\mathbf R}^3)$.
%Any $F(u)$ satisfying
%\[ \begin{array}{ll}
%\| F(u)-F(v) \| \le C \| u-v \|
%\end{array} \]
%for any $u, ~v \in X$ and $C < \infty$ is an infinitesimal generator in $X$.
%It implies the boundedness of $F(u)$ on $X$.
%\end{lemma}

%\begin{proof}
%$X$ is a Banach space.
%Under the assumption of the Lipschitz continuity:
%\[ \begin{array}{ll}
%\| F(u)-F(v) \| \le C \| u-v \|
%\end{array} \]
%for any $u, ~v \in X$ leads to the fact that
%\[ \begin{array}{ll}
%u(t) = \int_0^{t} F(u(\tau)) d \tau
%\end{array} \]
%satisfies $d u(t)/dt = F(u(t))$.
%Accordingly, if the solution mapping $S$ is defined by
%\[ \begin{array}{ll}
%S v(t) := \int_0^{t} F(v(\tau)) d \tau,
%\end{array} \]
%for $v(t) \in X$, it leads to 
%\[ \begin{array}{ll}
%\| S (u(t) - v(t)) \| = \left\| \int_0^{t} F(u(\tau)) - F(v(\tau)) d \tau \right\|  \vspace{1.5mm} \\
%\quad \le  \int_0^{t} \| F(u(\tau)) - F(v(\tau)) \| d \tau \vspace{1.5mm}  \\
%\quad \le   C |t| \| u(\tau) - v(\tau) \|.
%\end{array} \]
%It follows that $S$ is a contraction for $t$ satisfying $|t| < C^{-1}$. 
%In this situation a fixed point ${\bar u}(t) \in X ~s.t.~ S{\bar u}(t) = {\bar u}(t)$ exists, so that the unique existence of the local solution for $d u(t)/dt = F(u(t))$ is valid.
%\end{proof}

%The concept of pre-infinitesimal generator is defined by ...

%Two lemmas showing the infinitesimal generator properties are proved at first.

The building blocks of rotation groups are introduced as pre-infinitesimal generators in the following lemmas.
%For $r = (x^2+y^2+z^2)^{1/2}$, let $L_r^2({\mathbf R}^3)$ be defined by
%\[  \begin{array}{ll}
%L_r^2({\mathbf R}^3) = \left\{ u(x,y,z) \in L^2({\mathbf R}^3); 
%~ \| u(x,y,z) \|^2_{L_r^2({\mathbf R}^3)}:= \int \int  \int_{{\mathbf R}^3} e^{-i |r|}  |u(x,y,z)|^2 ~ dx ~ dy ~ dz < \infty  \right\}.
% \end{array} \]
%In the same manner, $L_{r_k}^2({\mathbf R})$ is defined by
%\[  \begin{array}{ll}
%L_{r_k}^2({\mathbf R}) = \{ u(r_k) \in L^2({\mathbf R}); 
%~ \| u(r_k) \|^2_{L_{r_k}^2({\mathbf R})}:= \int_{\mathbf R} e^{- i|r_k|}  |u(r_k)|^2 ~ d r_k  < \infty  \},
% \end{array} \]
%where $r_k$ is either $x$, $y$, or $z$.
%The part $e^{- i |r_k|}$ is regarded as a gauge fixed to the coordinate.
%The spaces $L_{r}^2({\mathbf R}^3)$ and $L_{r_k}^2({\mathbf R})$ are Banach spaces equipped with the above norms. 
 \\

\begin{lemma}
Let $r_k$ be either $x$, $y$, or $z$.
Let $I$ be the identity operator of $L^2({\mathbf R}^3)$.
An operator $i r_k I$ is an infinitesimal generator in $L^2({\mathbf R}^3)$.
%In particular the generated group is represented by the power series expansion.
\end{lemma}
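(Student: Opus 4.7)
The plan is to exhibit a strongly continuous one-parameter unitary group on $L^2({\mathbf R}^3)$ whose pre-infinitesimal generator coincides with $i r_k I$ on a dense domain; by the terminology fixed earlier in the paper, density upgrades a pre-infinitesimal generator to a bona fide infinitesimal generator. The natural candidate is the pointwise multiplication group $(U(t)\psi)(x,y,z) := e^{i t r_k}\psi(x,y,z)$, where the exponent $t r_k$ is real-valued. Since $|e^{i t r_k}| \equiv 1$, each $U(t)$ is unitary on $L^2({\mathbf R}^3)$, and the scalar exponential law on ${\mathbf C}$ immediately furnishes the group property $U(t+s) = U(t) U(s)$ with $U(0) = I$.

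Next I would establish strong continuity by writing $\|U(t)\psi - \psi\|^2 = \int_{{\mathbf R}^3} |e^{i t r_k} - 1|^2 |\psi|^2 \, dx\, dy\, dz$ and invoking dominated convergence: the integrand tends pointwise to $0$ as $t \to 0$ and is dominated by $4|\psi|^2 \in L^1({\mathbf R}^3)$. I would then identify the generator by showing that on the natural domain $D := \{ \psi \in L^2({\mathbf R}^3) : r_k \psi \in L^2({\mathbf R}^3) \}$ a second application of dominated convergence yields the strong $L^2$ limit $h^{-1}(U(h)\psi - \psi) \to i r_k \psi$, so in particular the weak limit appearing in the paper's definition of pre-infinitesimal generator exists on $D$ and equals $i r_k \psi$. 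Because $C_c^{\infty}({\mathbf R}^3) \subset D$, the domain $D$ is dense in $L^2({\mathbf R}^3)$, which is precisely the additional property needed to conclude that $i r_k I$ is an infinitesimal generator and not merely a pre-infinitesimal one.

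The main obstacle is domain bookkeeping rather than any deep functional-analytic difficulty: one must confirm that the weak limit used in the paper's definition coincides with the strong limit computed above on $D$, and that $D$ is dense. No further subtlety arises because multiplication by the real coordinate $r_k$ is a maximal self-adjoint operator, so as a shortcut Stone's theorem directly identifies $i r_k I$ as the skew-adjoint generator of the unitary group $U(t)$; the argument sketched above is essentially an elementary unpacking of that fact adapted to the pre-infinitesimal generator language of the preceding sections.
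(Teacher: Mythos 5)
Your proof is correct, and it identifies the same underlying object as the paper: the unitary multiplication group $(U(t)\psi)(x,y,z)=e^{itr_k}\psi(x,y,z)$. The difference is in how that group is legitimized. The paper defines $e^{itr_kI}$ through the power series $\sum_j (itr_kI)^j/j!$ (convergent pointwise in the spatial variable, since the exponent is a real scalar at each point, even though $r_kI$ is unbounded on $L^2$) and then notes that $\int|e^{itr_k}u|^2\,dr_k=\int|u|^2\,dr_k$, i.e.\ the group is isometric; it does not explicitly verify strong continuity or identify a dense domain on which the difference quotient converges to $ir_k\psi$. You instead define $U(t)$ directly as multiplication by the unimodular scalar, check the group law, prove strong continuity by dominated convergence, and identify the generator on $D=\{\psi: r_k\psi\in L^2\}$ (with the domination $|h^{-1}(e^{ihr_k}-1)\psi|\le|r_k\psi|$), observing that $C_c^\infty\subset D$ gives density; Stone's theorem then packages the conclusion. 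Your route supplies exactly the steps the paper leaves implicit --- strong continuity, the dense domain, and the passage from pre-infinitesimal to infinitesimal generator --- and avoids leaning on a power series of an unbounded operator, so it is the more complete version of the same argument.
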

%%%
\begin{proof}
For any $w \in {\mathbf C}$, it is possible to define the exponential function by the convergent power series:
$e^{w} = \sum_{j=0}^{\infty} (w)^j/j!$, so that 
\[  \begin{array}{ll}
 e^{i t r_k I} ={\displaystyle \sum_{j=0}^{\infty}} \frac{1}{j!} (i t r_k I)^j 
  \end{array} \]
is well-defined for $t, r_k \in {\mathbf R}$.
This fact is ensured by the boundedness of the identity operator $I$, although $r_k I$ and $i r_k I$ are not bounded operators in $L^2({\mathbf R})$ if the standard $L^2$-norm is equipped.
It is sufficient for $i r_k I$ to be the pre-infinitesimal generator.

For an arbitrary $r_k \in {\mathbf R}$, an operator $i r_k I$ with its domain $L^2({\mathbf R})$ is the infinitesimal generator in $L^2({\mathbf R})$; 
indeed, the spectral set is on the imaginary axis of the complex plane, and the unitary operator is generated as
%\[   \begin{array}{ll}
%\| i x u \|_{L_x^2({\mathbf R})}^2  = \int_{\mathbf R} e^{- i|x|}  |i xu|^2 ~ d x
%\le {\displaystyle \sup_{x \in {\mathbf R}} } (|x|^2 e^{-|x|}) \| u \|_{L^2({\mathbf R})}
%=   {\displaystyle \sup_{x \in {\mathbf R}} } (|x|^2 e^{-|x|}) \| u \|_{L_x^2({\mathbf R})}
%\end{array}  \]
%for any $x \in {\mathbf R}$ and $u \in L_x^2({\mathbf R}) \subset L^2({\mathbf R})$, where the norm equivalence between $L_x^2({\mathbf R})$ and $L^2({\mathbf R})$ is seen by
\[   \begin{array}{ll}
\int |(e^{it r_k I} u)|^2 ~ d r_k 
%= \int \int \int |e^{it |x| I} u |^2 ~ dx ~ dy~ dz %\vspace{2.5mm} \\\quad 
= \int |u|^2 ~ d r_k.
\end{array}  \]
Consequently, the operator $i r_k I$ is treated as an infinitesimal generator in $L^2({\mathbf R})$ and therefore in $L^2({\mathbf R}^3)$. 
\end{proof}

%In the following, the operator ${i  r_k I} $ is identified with $(I + \kappa e^{i  r_k I} ) {\rm Log} (e^{i  r_k I} + \kappa I)$.

%For $r = (x^2+y^2+z^2)^{1/2}$, let $H^1_r({\mathbf R}^3)$ be defined by
%\[ \begin{array}{ll}
%H_r^1({\mathbf R}^3) = \vspace{1.5mm} \\
% \left\{ u(x,y,z) \in H^1({\mathbf R}^3); 
% \| u(x,y,z) \|^2_{H_r^1({\mathbf R}^3)}:= \ \int \int  \int_{{\mathbf R}^3} \sum_{p=0,1} e^{-|r|}  |(u(x,y,z))^{(p)}|^2 ~ dx %~ dy ~ dz  < \infty  \right\}. 
%\end{array} \]
%In the same manner, $H_{r_i}^1({\mathbf R})$ is also defined. 
%This space is utilized to treat the unboundedness of angular momentum operator. \\

\begin{lemma}
For $r_i$ equal to $x$, $y$, or $z$, an operator $\partial_{r_i} $ with its domain $H^1({\mathbf R}^3)$ is an infinitesimal generator in $L^2({\mathbf R}^3)$.
\end{lemma}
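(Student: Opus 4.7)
The plan is to exhibit the translation group on $L^2({\mathbf R}^3)$ along the $r_i$-axis and identify $\partial_{r_i}$ with its infinitesimal generator by directly computing the strong derivative at $t=0$ on the dense domain $H^1({\mathbf R}^3)$. Define $(T(t) u)({\mathbf r}) := u({\mathbf r} + t {\mathbf e}_i)$ for $t \in {\mathbf R}$ and $u \in L^2({\mathbf R}^3)$, where ${\mathbf e}_i$ is the unit vector along the $r_i$-axis. Translation invariance of Lebesgue measure gives $\|T(t) u\|_{L^2} = \|u\|_{L^2}$, so each $T(t)$ is unitary, and the group law $T(t+s) = T(t) T(s)$, $T(0) = I$ is a routine change of variable.

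Next I would establish strong continuity $\|T(t) u - u\|_{L^2} \to 0$ as $t \to 0$ by first verifying it on $C_c^\infty({\mathbf R}^3)$ via dominated convergence, then extending to all of $L^2$ by a standard $\varepsilon/3$ argument that uses the uniform bound $\|T(t)\| = 1$. To identify the generator I would invoke the pointwise identity $T(t) u - u = \int_0^t T(\tau) \partial_{r_i} u \, d\tau$, valid for $u \in C_c^\infty({\mathbf R}^3)$, which immediately yields $t^{-1}(T(t) u - u) \to \partial_{r_i} u$ strongly. Extending this identity to $u \in H^1({\mathbf R}^3)$ by approximation in the $H^1$-norm and using the closedness of $\partial_{r_i}$ gives the same conclusion on the full prescribed domain. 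Since $H^1({\mathbf R}^3)$ is dense in $L^2({\mathbf R}^3)$, the Definition of Section 3.1 is satisfied.

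The main obstacle is that $H^1({\mathbf R}^3)$ is not the maximal domain of the generator of $T(t)$ --- the latter only requires the $r_i$-partial derivative to lie in $L^2$. This is resolved by observing that $H^1({\mathbf R}^3)$ is a $T(t)$-invariant dense core: mollification commutes with translation, so smooth approximations remain in $H^1$, and consequently the closure of $\partial_{r_i}|_{H^1}$ coincides with the full generator. An alternative, perhaps cleaner, route is Fourier-analytic: $\partial_{r_i}$ is unitarily equivalent to multiplication by $i \xi_i$, which is skew-adjoint on its natural domain, and Stone's theorem supplies the group together with the core property in a single stroke.
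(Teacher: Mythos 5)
Your proof is correct, but it takes a genuinely different route from the paper. You construct the group explicitly: the translation group $T(t)u(\mathbf{r})=u(\mathbf{r}+t\mathbf{e}_i)$ is unitary, strongly continuous, and its strong derivative at $t=0$ on $C_c^\infty$ is $\partial_{r_i}$, extended to $H^1$ by density and closedness (or, in your alternative, via the Fourier transform and Stone's theorem). The paper instead never exhibits the group: it solves the resolvent equation $\partial_x u=\lambda u-f$ explicitly by an integral formula, derives the bound $\|(\lambda I-\partial_x)^{-1}\|\le 1/\mathrm{Re}\,\lambda$ for $\mathrm{Re}\,\lambda>0$ via the Schwarz inequality, and then invokes a Lumer--Phillips/Hille--Yosida-type generation theorem. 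Your route is more elementary and yields the concrete form of $e^{t\partial_{r_i}}$ (translation), which is arguably the cleanest way to see the result; the paper's route has the advantage of producing the explicit resolvent half-plane and bound, which the proof of Theorem 5 later cites verbatim (``the half plane $\{\lambda:\mathrm{Re}\,\lambda>0\}$ is included in the resolvent set \dots cf.\ the proof of Lemma 4''), and of fitting the paper's stated program of avoiding any appeal to self-adjointness --- your Stone's-theorem variant would not serve that program, though your direct translation-group argument still would. One point where you are actually more careful than the paper: you flag that $H^1(\mathbf{R}^3)$ is strictly smaller than the maximal domain $\{u:\partial_{r_i}u\in L^2\}$ of the generator in three dimensions, so $\partial_{r_i}|_{H^1}$ is only a core whose closure is the generator; the paper proves the one-dimensional statement (where $H^1(\mathbf{R})$ \emph{is} the maximal domain) and passes to $\mathbf{R}^3$ without comment on this discrepancy.
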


\begin{proof}
The operator $\partial_x$ is known as the infinitesimal generator of the first order hyperbolic type partial differential equations.
For a complex number $\lambda$ satisfying ${\rm Re} \lambda > 0$, let us consider a differential equation
\begin{equation} \label{de1} \begin{array}{ll}
 \partial_x  u(x) =  \lambda u(x) - f(x) 
\end{array} \end{equation}
in $L^2({\mathbf R})$, and
\[ \begin{array}{ll}
 u(x) = - \int_x^{\infty}  \exp [\lambda (x-\xi)] f(\xi) d \xi
\end{array} \]
satisfies the equation.
According to the Schwarz inequality,
\[ \begin{array}{ll}
 \int_{-\infty}^{+\infty} | u(x) |^2 dx
  = \int_{-\infty}^{+\infty} | \int_x^{\infty}  \exp [\lambda (x-\xi)] f(\xi) d \xi |^2 dx \vspace{1.5mm}  \\
\qquad  \le  \int_{-\infty}^{+\infty} \left\{ \int_x^{\infty} \exp \left[ \frac{({\rm Re} \lambda) (x-\xi)}{2} \right]  \exp \left[ \frac{({\rm Re} \lambda) (x-\xi)}{2} \right] |f(\xi)| d \xi \right\}^2 dx  \vspace{1.5mm}  \\
\qquad  \le  \int_{-\infty}^{+\infty} \int_x^{\infty} \exp \left[({\rm Re} \lambda) (x-\xi) \right] d \xi ~
  \int_x^{\infty} \exp \left[({\rm Re} \lambda) (x-\xi) \right] |f(\xi)|^2 d \xi ~ dx
\end{array} \]
is obtained, because $|e^{\lambda/2}|^2 = |e^{{\rm Re}\lambda/2}|^2 ~ |e^{i{\rm Im}\lambda/2}|^2 \le e^{{\rm Re}\lambda}$ is valid if ${\rm Re} \lambda > 0$.
Here the equality
\[ \begin{array}{ll}
 \int_x^{\infty} \exp \left[({\rm Re} \lambda) (x-\xi) \right] d \xi
 =  \int_0^{\infty} \exp \left[(-{\rm Re} \lambda) \xi \right] d \xi 
 =  \frac{1}{{\rm Re} \lambda}
\end{array} \]
is positive valued if ${\rm Re} \lambda > 0$.
Its application leads to
\[ \begin{array}{ll}
 \int_{-\infty}^{+\infty} | u(x) |^2 dx
 \le \frac{1}{{\rm Re} \lambda}  \int_{-\infty}^{+\infty}  \int_x^{\infty} \exp \left[({\rm Re} \lambda) (x-\xi) \right] |f(\xi)|^2 d \xi~ dx  \vspace{1.5mm}\\
\qquad \le \frac{1}{{\rm Re} \lambda}  \int_{-\infty}^{+\infty}   \int_{-\infty}^{\xi} \exp \left[({\rm Re} \lambda) (x-\xi) \right] dx  |f(\xi)|^2 d \xi
\end{array} \]
Further application of the equality
\[ \begin{array}{ll}
 \int_{-\infty}^{\xi} \exp \left[({\rm Re} \lambda) (x-\xi) \right] dx
 =  \int_{-\infty}^0 \exp \left[({\rm Re} \lambda) x \right] dx
 =  \frac{1}{{\rm Re} \lambda}
\end{array} \]
results in
\[ \begin{array}{ll}
 \int_{-\infty}^{+\infty} | u(x) |^2 dx
 \le \frac{1}{{\rm Re} \lambda^2}  \int_{-\infty}^{+\infty}  |f(\xi)|^2  d \xi,
\end{array} \]
and therefore
\[ \begin{array}{ll}
\| (\lambda I - \partial_x)^{-1} f \|_{L^2({\mathbf R})} \le  \frac{1}{{\rm Re} \lambda^2} \| f \|_{L^2({\mathbf R})}.
\end{array} \]
That is, for ${\rm Re} \lambda > 0$,
\[ \begin{array}{ll}
\| (\lambda I - \partial_x)^{-1}  \| \le  \frac{1}{{\rm Re} \lambda}
\end{array} \]
is valid. 
The surjective property of $(\lambda I - \partial_k)$ is seen by the unique existence of solution $u(x) \in L^2({\mathbf R})$ for the initial value problem of Eq.~(\ref{de1}).

A semigroup is generated by taking a subset of the complex plane as
\[ \begin{array}{ll}
\Omega = \{ \lambda \in {\mathbf C}; ~ \lambda = \overline{\lambda} \}
\end{array} \]
where $\Omega$ is included in the resolvent set of $\partial_x$.
For $\lambda \in \Omega$, $(\lambda I - \partial_x)^{-1}$ exists, and
\[ \begin{array}{ll}
\| (\lambda I - \partial_x)^{-n}  \| \le  \frac{1}{({\rm Re} \lambda)^n}
\end{array} \]
is obtained.
Consequently, according to the Lumer-Phillips theorem \cite{61lumer,52phillips} for the generation of quasi contraction semigroup, $\partial_x$ with the domain space $H^1({\mathbf R})$ is confirmed to be an infinitesimal generator in $L^2({\mathbf R})$.
The similar argument is valid to $\partial_y$ and  $\partial_z$.
By considering $(x,y,z) \in {\mathbf R}^3$,  $\partial_k$ with $k=x,y,z$ are the infinitesimal generators in $L^2({\mathbf R}^3)$.
\end{proof}

\subsection{Main theorems}
In order to establish $\{ i {\mathcal L}_k/\hbar; ~ k = x,y,z \}$ as the Lie algebra, it is necessary to show 
\[
\pm  i {\mathcal L}_k/\hbar =  \pm (r_i \partial_{r_j} - r_j \partial_{r_i})
\]
as an infinitesimal generator in $L_{r}^2({\mathbf R}^3)$, where $i,j,k =x,y,z$ satisfies $i \ne j \ne k$. 
%%%%%%  
According to Lemmas 3 and 4, both $r_i I$ and $\partial_{r_j}$ are infinitesimal generators in $L_{r}^2({\mathbf R}^3)$.
It follows that, for $t \in {\mathbf R}$, groups $e^{i t r_i I} $ and $e^{t \partial_{r_j}}$ are well-defined in $L_{r}^2({\mathbf R}^3)$. \\

\begin{theorem}
Let $r_i$ be either $x$, $y$, or $z$.
For $i \ne j$, an operator $\pm r_i \partial_{r_j}$ with its domain space $H_{r}^1({\mathbf R}^3)$ is an infinitesimal generator in $L_{r}^2({\mathbf R}^3)$.
%%%
Consequently, the angular momentum operators 
\[  \begin{array}{ll}
\quad \pm i {\mathcal L}_x/\hbar  = \pm ( y \partial_{z} -  z \partial_{y}),    \vspace{2.5mm} \\
\quad \pm i {\mathcal L}_y/\hbar  = \pm  ( z \partial_{x} -  x \partial_{z}),   \vspace{2.5mm} \\
\quad \pm i {\mathcal L}_z/\hbar  = \pm ( x \partial_{y} -  y \partial_{x})
\end{array} \]
are infinitesimal generators in $L_{r}^2({\mathbf R}^3)$.
%In particular, they behave as bounded operators on $L_{r}^2({\mathbf R}^3)$.
\end{theorem}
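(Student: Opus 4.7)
The plan is to realize $\pm r_i \partial_{r_j}$ (for $i\neq j$) as the infinitesimal generator of an explicit $C^0$-group of isometries on $L^2_r(\mathbf{R}^3)$, thereby bypassing the fact that Lemma 1 in its present form requires the multiplier $L$ to lie in $B_{ab}(X)$, whereas the multiplication operator $r_i I$ is unbounded on $L^2_r(\mathbf{R}^3)$. For concreteness take $r_i = y$, $r_j = z$, and define the shear $(V_t u)(x,y,z) := u(x,\, y,\, z + t y)$ for $u \in L^2_r(\mathbf{R}^3)$ and $t \in \mathbf{R}$. Since the change of variables $z \mapsto z + t y$ has Jacobian $1$ with $y$ held fixed, $V_t$ is isometric (in fact unitary) on $L^2_r(\mathbf{R}^3)$, the group law $V_t V_s = V_{t+s}$ is immediate, and strong continuity $V_t u \to u$ follows from density of $C_c^\infty(\mathbf{R}^3)$ together with dominated convergence.

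To identify the generator, I would differentiate $V_t u$ at $t=0$ for $u \in H^1_r(\mathbf{R}^3)$: a Taylor estimate applied slice-wise in $z$ yields $(V_t u - u)/t \to y\,\partial_z u$ in $L^2_r(\mathbf{R}^3)$. Since $H^1_r(\mathbf{R}^3)$ is dense in $L^2_r(\mathbf{R}^3)$ and invariant under $V_t$, this identifies $y \partial_z$ on $H^1_r(\mathbf{R}^3)$ as the infinitesimal generator of $V_t$; replacing $t$ by $-t$ handles $-y\partial_z$, and cyclic permutation of coordinates covers every ordered pair $(r_i, r_j)$ with $i\neq j$. After the fact, one can feed $V_t$ into the logarithmic representation (\ref{logr}) and record $r_i \partial_{r_j}$ as an element arising from the $B(X)$-module framework of Section 3.1, which is the sense in which the pre-infinitesimal generator structure of Lemmas 1 and 2 is recovered here.

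For the angular-momentum operators, the ``consequently'' clause does \emph{not} follow formally from the first assertion, because $r_i \partial_{r_j}$ and $r_j \partial_{r_i}$ fail to commute, so the sum is not a sum of two commuting generators. I would instead construct the rotation group directly: for $-i \mathcal{L}_z / \hbar = x\partial_y - y\partial_x$, set $(R^z_t u)(x,y,z) := u(\cos t\cdot x + \sin t\cdot y,\ -\sin t\cdot x + \cos t\cdot y,\ z)$. Orthogonality of the rotation matrix gives $L^2_r(\mathbf{R}^3)$-isometry, the group law is inherited from $SO(3)$, strong continuity is standard, and differentiation at $t=0$ on $H^1_r(\mathbf{R}^3)$ recovers $y\partial_x - x\partial_y$ as the generator. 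Cyclic permutation of axes yields the analogous statements for $\mathcal{L}_x$ and $\mathcal{L}_y$. The principal obstacle is exactly this non-commutativity: it blocks a purely product-of-generators derivation via Lemma 1 and forces the rotation groups to be introduced as independent geometric objects rather than being assembled algebraically from the shear generators produced in the first half of the argument.
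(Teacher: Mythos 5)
Your proposal is correct, but it takes a genuinely different route from the paper. The paper stays entirely inside its logarithmic-representation/$B(X)$-module machinery: it first writes $\pm\partial_{r_j}$ in the form $(I+\kappa e^{\pm(s-t)\partial_{r_j}})\partial_t\,\mathrm{Log}(e^{\pm(t-s)\partial_{r_j}}+\kappa I)$, obtains $\pm r_i\partial_{r_j}$ as a product in the spirit of Lemma 1 by moving $r_i$ inside the logarithm, and then treats $r_i\partial_{r_j}-r_j\partial_{r_i}$ by an explicit algebraic decomposition of the difference of two logarithmic representations into three terms of the form $G(t)\,\partial_t[r_i\,\mathrm{Log}(U(t,s)+K(t))]$, to which Lemma 2 is applied. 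You instead exhibit the generated groups concretely: the shear $(V_tu)(x,y,z)=u(x,y,z+ty)$ for $y\partial_z$ and the rotation $R^z_t$ for $x\partial_y-y\partial_x$, with isometry, the group law, strong continuity, and identification of the generator on $H^1_r(\mathbf{R}^3)$ (which, being dense and $V_t$-invariant, is a core). Your route is more elementary and arguably more airtight: it sidesteps the issue you correctly flag, namely that Lemma 1 requires the multiplier $L$ to be a bounded operator while $r_iI$ is unbounded on $L^2_r(\mathbf{R}^3)$, and it avoids the paper's sum-perturbation step, whose reliance on the non-commuting pieces $r_i\partial_{r_j}$ and $r_j\partial_{r_i}$ is exactly the delicate point you identify. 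What you give up is the paper's actual objective: the theorem is meant to demonstrate that the angular momentum operators arise from the $B(X)$-module calculus (product and sum closedness of logarithmic representations), and your closing remark that the logarithmic representation can be recorded ``after the fact'' concedes that your argument does not itself exercise that framework. As a verification of the generation statement your proof is the stronger one; as a proof in the spirit of the paper it replaces the intended mechanism.
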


\begin{proof}
Let $i \ne j$ be satisfied for $i,j =x,y,z$.
Since $e^{\pm(t-s) \partial_{r_j}}$ is well-defined (cf. Lemma 4) with the domain space $H_{r}^1({\mathbf R}^3)$, its logarithmic representation is obtained by
\[ \begin{array}{ll}
%r_i =  ( I + \kappa e^{(s-t) r_i}  )^{-1} \partial_t {\rm Log} (e^{(t-s) r_i} + \kappa I),   \vspace{2.5mm}   \\
\pm \partial_{r_j} =  ( I + \kappa e^{\pm (s-t) \partial_{r_j}}) \partial_t {\rm Log} (e^{ \pm (t-s) \partial_{r_j}} + \kappa I),  
\end{array} \]
where $\kappa \ne 0$ is a certain complex number. 
%The following two points are essential
%\begin{itemize}
%\item the commutation between $r_i I$ and $\partial_{r_j}$ is true if $i \ne j$ is satisfied (cf. Eq.~(\ref{omt}));
%\item $r_i I$ plays a role of bounded operator in $L_{r}^2({\mathbf R}^3)$ (cf. Lemma 3);
%\end{itemize}
%in terms of defining $r_i \partial_{r_j}$ as a pre.
%%%
The product between $i r_i I$ and $ \pm \partial_{r_j}$ is represented by
\[
\pm i r_i \partial_{r_j} 
= i r_i   ( I - \kappa e^{ \pm (s-t) \partial_{r_j}}) \partial_t [ {\rm Log} (e^{ \pm (t-s) \partial_{r_j}} + \kappa I) ].
% =  ( I + \kappa e^{ \pm (s-t) \partial_{r_j}}) \partial_t [ i r_i {\rm Log} (e^{ \pm (t-s) \partial_{r_j}} + \kappa I)],
\]
Using the commutation and $t$-independence of $r_i I$, it leads to the logarithmic representation
\[
\pm  r_i \partial_{r_j} 
%=  r_i   ( I - \kappa e^{ \pm (s-t) \partial_{r_j}}) \partial_t [ {\rm Log} (e^{ \pm (t-s) \partial_{r_j}} + \kappa I) ]
 =  ( I + \kappa e^{ \pm (s-t) \partial_{r_j}}) \partial_t [  r_i {\rm Log} (e^{ \pm (t-s) \partial_{r_j}} + \kappa I)]
\]
without the loss of generality.
The domain space of $r_i I$ is equal to $L^2({\mathbf R}^3)$, as $e^{ i r_i I}$ is represented by the convergent power series in $L_{r}^2({\mathbf R}^3)$.
The half plane $\{ \lambda \in {\mathbf C}; ~ {\rm Re} \lambda > 0 \}$ is included in the resolvent set of $  \pm r_i \partial_{r_j} $ (cf. the proof of Lemma 4). 
Consequently, for $t, r_i \in {\mathbf R}$, the existence of $e^{\pm t r_i \partial_{r_j}}$ directly follows from the confirmed existence of $e^{\pm t \partial_{r_j}}$ (cf. the proof of Lemma 3).
Being equipped with the domain space $H_{r}^1({\mathbf R}^3)$,  $\pm r_i \partial_{r_j}$ is the infinitesimal generator in $L_{r}^2({\mathbf R}^3)$. 

%In the same manner, if an operator $-\partial_k$ with its domain $H^1({\mathbf R}^3)$ is considered, Eq.~(\ref{de1}) is replaced with
%\begin{equation} \label{de2} \begin{array}{ll}
% \partial_x  u(x) =  -\lambda u(x) + f(x) 
%\end{array} \end{equation}
%in $L^2(-\infty,+\infty)$, and
%\[ \begin{array}{ll}
% u(x) =  \int_x^{\infty}  \exp [-\lambda (x-\xi)] f(\xi) d \xi
%\end{array} \]
%satisfies the equation.

The pre-infinitesimal generator property of sum is also understood by the $B(X)$-module property.
The sum  between $r_i \partial_{r_j} $ and $- r_j \partial_{r_i} $ is represented by
\begin{equation} \begin{array}{ll} \label{sumre}
 ( I + \kappa e^{+ (s-t) \partial_{r_j}})  \partial_t  [ r_i ~ {\rm Log} (e^{+ (t-s) \partial_{r_j}} + \kappa I)]
 - 
 ( I + \kappa e^{- (s-t) \partial_{r_i}}) \partial_t  [ r_j ~ {\rm Log} (e^{- (t-s) \partial_{r_i}} + \kappa I)]  \vspace{1.5mm} \\
 =
  ( I + \kappa e^{+ (s-t) ~ \partial_{r_j}})  
   \partial_t  [ r_i ~ {\rm Log} (e^{+ (t-s) \partial_{r_j}} + \kappa I)  -     r_j ~ {\rm Log} (e^{- (t-s) \partial_{r_i}} + \kappa I)]
  \\
\quad - 
 ( \kappa e^{+ (s-t) ~ \partial_{r_j}}  -  \kappa e^{- (s-t) \partial_{r_i}}) \partial_t  [ r_j ~ {\rm Log} (e^{- (t-s) \partial_{r_i}} + \kappa I)]  \vspace{1.5mm} \\
 =
  ( I + \kappa e^{+ (s-t) ~ \partial_{r_j}}) 
   \partial_t   r_i [  {\rm Log} (e^{+ (t-s) \partial_{r_j}} + \kappa I)  -~ {\rm Log} (e^{- (t-s) \partial_{r_i}} + \kappa I) ]  \\
\quad -
  ( I + \kappa e^{+ (s-t) ~ \partial_{r_j}})  
   \partial_t  [ (r_i   -     r_j )~ {\rm Log} (e^{- (t-s) \partial_{r_i}} + \kappa I)]  \\
%%%   
\quad - 
 ( \kappa e^{+ (s-t) ~ \partial_{r_j}}  -  \kappa e^{- (s-t) \partial_{r_i}}) \partial_t  [ r_j ~ {\rm Log} (e^{- (t-s) \partial_{r_i}} + \kappa I)],
\end{array} \end{equation}
where all the three terms in the right hand side are of the form
\[  \begin{array}{ll}
G(t)  \partial_t [ r_i  {\rm Log} (U(t, s) + K(t))]
\end{array} \]
whose pre-infinitesimal generator properties are proved similarly to Lemma 2.
In particular, the first term in the right hand side can be reduced to the above form with $L=1$ and $t$-dependent $K(t)$ (see the proof of Theorem 2 of Ref.~\cite{17iwata-2}), the parts corresponding to $G(t)$ are strongly continuous, and $r_i$ is independent of $t$.  
After having an integral of Eq.~(\ref{sumre}) in terms of $t$, each term is regraded as a bounded operator on $L_{r}^2({\mathbf R}^3)$.
Consequently, for $i \ne j$, the application of Lemma 2 leads to the fact that
\[  \begin{array}{ll}
\pm ( r_i \partial_{r_j} - r_j \partial_{r_i})
\end{array} \]
with its domain space $H_{r}^1({\mathbf R}^3)$ is the infinitesimal generator in $L_{r}^2({\mathbf R}^3)$.
\end{proof}

This lemma shows the unbounded sum-perturbation for infinitesimal generators of $C^0$-groups.
It directly follows from the sum closedness of $B(X)$-module. 
Again it does not require the self-adjointness of the operator.

\begin{theorem}
For $t, s \in [-T, +T]$, let $V_k(t,s)$ with $k=x,y,z$ in $L_{r}^2({\mathbf R}^3)$ be generated by $i {\mathcal L}_k/\hbar$. 
For a certain complex constant $\kappa \ne 0$, the angular momentum operator $\pm i {\mathcal L}_k/\hbar $ with $k=x,y,z$ is represented by the logarithm
\begin{equation} \label{replk}
\pm i {\mathcal L}_k/\hbar = \pm ( I + \kappa V_k(s,t) ) \partial_t  [  {\rm Log} (V_k(t,s) + \kappa I)],
 \end{equation}
and the corresponding evolution operator is expanded by the convergent power series
\begin{equation} \begin{array}{ll} \label{replke}
V_k(t,s) = e^{ {\rm Log} (V_k(t,s) + \kappa I)} - \kappa I  \vspace{1.5mm} \\ \quad 
= {\displaystyle \sum_{n=0}^{\infty}} \frac{1}{n!}  ( {\rm Log} (V_k(t,s) + \kappa I) )^n - \kappa I  \vspace{1.5mm} \\ \quad 
=  (1-\kappa)I + {\displaystyle \sum_{n=1}^{\infty}} \frac{1}{n!}  ( {\rm Log} (V_k(t,s) + \kappa I) )^n 
\end{array} \end{equation}
where ${\rm Log} (V_k(t,s) + \kappa I)$ is bounded on $L_{r}^2({\mathbf R}^3)$, although ${\mathcal L}_k$ is unbounded in $L_{r}^2({\mathbf R}^3)$.
\end{theorem}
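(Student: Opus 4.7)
The plan is to apply the logarithmic representation (\ref{logr}) directly to the $C^0$-group $V_k(t,s)$ generated by $i\mathcal{L}_k/\hbar$, using Theorem 1 to guarantee the generation and choosing $\kappa$ so that the bounded part ${\rm Log}(V_k(t,s)+\kappa I)$ is well-defined by a Riesz--Dunford integral.

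First, I would invoke Theorem 1 to assert that $\pm i\mathcal{L}_k/\hbar$ with domain $H_r^1({\mathbf R}^3)$ is an infinitesimal generator on $L_r^2({\mathbf R}^3)$, so the strongly continuous group $V_k(t,s)$ exists. Because the generator is autonomous, $V_k(t,s)$ depends only on $t-s$ and commutes with $\pm i\mathcal{L}_k/\hbar$ on its domain. This commutation is exactly the hypothesis underlying (\ref{logr}), so setting $A(t)=\pm i\mathcal{L}_k/\hbar$ and $U(t,s)=V_k(t,s)$ in the logarithmic representation yields the first assertion (\ref{replk}) at once.

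Next, I would verify that ${\rm Log}(V_k(t,s)+\kappa I)$ is bounded on $L_r^2({\mathbf R}^3)$. Since $i\mathcal{L}_k/\hbar$ is skew-symmetric on its natural domain, $V_k(t,s)$ is unitary and its spectrum sits on the unit circle, independently of $(t,s)\in[-T,+T]$. Choosing any $\kappa\ne 0$ whose translation places the spectrum of $V_k(t,s)+\kappa I$ in a compact set disjoint from the non-positive real axis (for instance $\kappa>1$ or $\kappa\notin{\mathbf R}_{\le 0}$ with $|\kappa|$ large), the principal branch ${\rm Log}(V_k(t,s)+\kappa I)$ is defined by the Riesz--Dunford integral over a fixed contour $\Gamma$ enclosing that spectrum, and the integral converges in operator norm to a bounded operator, uniformly in $(t,s)$. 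This is the precise sense in which the unboundedness of $\mathcal{L}_k$ is absorbed into the dressing factor $(I+\kappa V_k(s,t))\partial_t$.

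Finally, boundedness of ${\rm Log}(V_k(t,s)+\kappa I)$ makes the power series $\sum_{n=0}^\infty \frac{1}{n!}({\rm Log}(V_k(t,s)+\kappa I))^n$ absolutely convergent in operator norm, and the holomorphic functional calculus identity $e^{{\rm Log}(B)}=B$ (valid because the spectrum of $B=V_k(t,s)+\kappa I$ avoids the chosen branch cut) gives $V_k(t,s)+\kappa I = \sum_{n=0}^\infty \frac{1}{n!}({\rm Log}(V_k(t,s)+\kappa I))^n$; subtracting $\kappa I$ and isolating the $n=0$ term as $(1-\kappa)I$ produces (\ref{replke}).

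The main obstacle I anticipate is verifying that a single $\kappa$ can be used uniformly for all $(t,s)\in[-T,+T]$, so that a common Riesz--Dunford contour $\Gamma$ works and $\partial_t{\rm Log}(V_k(t,s)+\kappa I)$ is meaningful. Unitarity of $V_k(t,s)$ pins the spectrum on the unit circle and resolves this uniformity issue cleanly; apart from this spectral bookkeeping, the rest of the argument is a direct transcription of the general $B(X)$-module machinery to the concrete autonomous generator $i\mathcal{L}_k/\hbar$.
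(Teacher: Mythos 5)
Your proposal is correct and follows essentially the same route as the paper: invoke the preceding theorem to obtain the group $V_k(t,s)$ generated by $i\mathcal{L}_k/\hbar$, apply the logarithmic representation (\ref{logr}) using the commutation of the autonomous generator with its own group, and then use the identity $V_k(t,s)+\kappa I=e^{{\rm Log}(V_k(t,s)+\kappa I)}$ to obtain the norm-convergent power series. The spectral bookkeeping you add (unitarity of $V_k$, uniform choice of $\kappa$ and of the Riesz--Dunford contour) is detail the paper delegates to its cited reference on logarithmic representations rather than a different argument.
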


\begin{proof}
According to Theorem 5, the group $V_k(t,s)$ with $k=x,y,z$ is generated by the infinitesimal generator $i {\mathcal L}_k/\hbar$ in $L_{r}^2({\mathbf R}^3)$.
This fact leads to the logarithmic representation 
\begin{equation}
i {\mathcal L}_k/\hbar =  ( I + \kappa V_k(s,t) ) \partial_t  [  {\rm Log} (V_k(t,s) + \kappa I)],
 \end{equation}
where $\kappa \ne 0$ is a certain complex constant.
The relation $V_k(t,s) + \kappa I =  e^{ {\rm Log} (V_k(t,s) + \kappa I )}$ admits the power series expansion of $V_k(t,s)$.
\end{proof}

Let us call the representation shown in Eq.~(\ref{replk}) the collective renormalization, in which a detailed  degree of freedom $r_i \partial_{r_j}$ is switched to a collective degree of freedom ${\mathcal L}_k$.
The collective renormalization is introduced for utilizing another degree of freedom instead of the original degree of freedom.
In a more mathematical sense, the collective renormalization plays a role of simplifying the representation.
Equation (\ref{replke}) ensures the validity of convergent power series expansions used in operator algebras even if they include unbounded operators.
%%%

\section{Concluding remark}
The mathematical foundation of rotation group has been demonstrated.
Although the evolution parameter in this article is denoted by $t,s \in [-T,+T]$, it is more likely to be denoted by $\theta, \sigma \in [-\Theta, +\Theta]$, because the evolution parameter in the present case means the rotation angle.

In summary, the product-perturbation theorem for the $C^0$-group of operators is shown by Lemma 1.
An algebraic concept $B(X)$-module is associated with the pre-infinitesimal generator in Lemma 2.
The angular momentum operators are formulated by the logarithmic representation in Theorem 5 as a result of unbounded sum-perturbation.
The bounded aspect of angular momentum operators is presented with respect to the collective renormalization in Theorem 6. 
%%%
In conclusion, the rotation group is formulated by means of the $B(X)$-module.
The present method based on the $B(X)$-module does not require the self-adjointness of the operator, so that it opens up a way to have a full-complex analysis (neither real nor pure-imaginary analysis) for a class of unbounded operators in association with the operator algebra.

\vspace{10mm}

\ack
The author is grateful to Prof. Emeritus Hiroki Tanabe of Osaka University for fruitful comments.
This work was supported by JSPS KAKENHI Grant No. 17K05440.

\vspace{10mm}

%\begin{figure}[h]
%\begin{minipage}{14pc}
%\includegraphics[width=14pc]{name.eps}
%\caption{\label{label}Figure caption for first of two sided figures.}
%\end{minipage}\hspace{2pc}%
%\begin{minipage}{14pc}
%\includegraphics[width=14pc]{name.eps}
%\caption{\label{label}Figure caption for second of two sided figures.}
%\end{minipage} 
%\end{figure}

\section*{References}

\end{document}